\providecommand{\algorithmname}{Algorithm}
\theoremstyle{plain}
\newtheorem{thm}{\protect\theoremname}
\theoremstyle{plain}
\newtheorem{prop}[thm]{\protect\propositionname}
\providecommand{\propositionname}{Proposition}
\providecommand{\theoremname}{Theorem}
\begin{document}

\title{\LARGE{Deep Learning Methods for Joint Optimization of Beamforming and Fronthaul Quantization in\\ Cloud Radio Access Networks}}

\author{Daesung Yu, Hoon Lee, Seok-Hwan Park, and Seung-Eun Hong \vspace{-7mm}\thanks{This work was supported by the National Research Foundation (NRF) of Korea grants funded by the Ministry of Education under Grants 2019R1A6A1A09031717 and 2021R1A6A3A13046157 and by the Ministry of Science and ICT under Grants 2021R1C1C1006557 and 2021R1I1A3054575. This work was also supported by Institute of Information \& communications Technology Planning \& Evaluation (IITP) grant funded by the Korea government (MSIT) (No. 2018-0-01659, 5G Open Intelligence-Defined RAN (ID-RAN) Technique based on 5G New Radio). 

D. Yu and S.-H. Park are with the Division of Electronic Engineering and the Future Semiconductor Convergence Technology Research Center, Jeonbuk
National University, Jeonju, Korea (email: \{imcreative93, seokhwan\}@jbnu.ac.kr).

H. Lee is with the Department of Information and Communications Engineering, Pukyong National University, Busan, Korea (email: hlee@pknu.ac.kr).

S.-E. Hong is with the Future Mobile Communication Research Division, Electronics and Telecommunications Research Institute, Daejeon 34129, South Korea (email: iptvguru@etri.re.kr).}}
\maketitle
\begin{abstract}
Cooperative beamforming across access points (APs) and fronthaul quantization strategies are essential for cloud radio access network (C-RAN) systems. The nonconvexity of the C-RAN optimization problems, which is stemmed from per-AP power and fronthaul capacity constraints, requires high computational complexity for executing iterative algorithms. To resolve this issue, we investigate a deep learning approach where the optimization module is replaced with a well-trained deep neural network (DNN). An efficient learning solution is proposed which constructs a DNN to produce a low-dimensional representation of optimal beamforming and quantization strategies. Numerical results validate the advantages of the proposed learning solution.
\end{abstract}
\begin{IEEEkeywords}
Cloud radio access networks, deep learning, beamforming optimization, constrained fronthaul.
\end{IEEEkeywords}
\theoremstyle{theorem}
\newtheorem{theorem}{Theorem} 
\theoremstyle{proposition}
\newtheorem{proposition}{Proposition} 
\theoremstyle{lemma}
\newtheorem{lemma}{Lemma} 
\theoremstyle{corollary}
\newtheorem{corollary}{Corollary} 
\theoremstyle{definition}
\newtheorem{definition}{Definition}
\theoremstyle{remark}
\newtheorem{remark}{Remark}

\section{Introduction}
Cloud radio access network (C-RAN) has been regarded as a promising architecture for the next-generation wireless networks \cite{Simeone-et-al:JCN16}.
The C-RAN enables centralized signal processing by means of fronthaul links connecting central processors (CPs) and access points (APs). Due to the limited capacity of practical fronthaul channels, transmission strategies of the APs should be jointly designed along with the fronthaul interaction methods, i.e., the fronthaul quantization policies \cite{Peng-et-al:WC15}. There have been intensive studies on optimizing the performance of the C-RAN systems by iterative algorithms, e.g., transceiver design \cite{Park-et-al:TSP13,Yu-et-al:WCL19} and AP clustering \cite{Guo-et-al:JSAC16}. These traditional schemes, however, would not be implemented in practice due to their high computational complexity for executing iterative calculations.

Recent progresses on deep learning (DL) techniques have opened new research directions for developing low-complexity optimization methods in wireless networks \cite{Sun-et-al:TSP18, Zhang-et-al:TWC20, Hao-et-al:Access18, Kim-et-al:WCL20}. The basic idea is to replace optimization modules with deep neural networks (DNNs) which are trained in advance for optimizing the system performance. The complexity of trained DNNs are much lower than that of conventional iterative algorithms since DNN computations are carried out by simple matrix multiplications. Power control problems in interfering networks are investigated in \cite{Sun-et-al:TSP18}. Supervised learning approaches are presented which train DNNs to memorize solutions generated by existing weighted minimum mean squared error (WMMSE) algorithms. The authors in \cite{Zhang-et-al:TWC20} address multi-antenna beamforming optimization tasks through the supervised DL technique. DNNs are designed to learn the computations of handcraft beamforming optimization algorithms by exploiting the known optimal solutions. Although the time complexity can be reduced by the DNNs, their training steps need numerous samples of the optimal solutions obtained from the iterative algorithms, thereby increasing the training difficulty.

To address this issue, recent works \cite{Hao-et-al:Access18,Kim-et-al:WCL20}  have investigated unsupervised DL techniques which can identify efficient optimization strategies without any labels, i.e., the solutions of conventional algorithms. DNNs are trained to yield beamforming vectors that maximize the sum-rate performance under the sum transmit power constraint. It has been reported that, without the prior information of the optimal solutions, the unsupervised DL-based beamforming schemes could achieve the almost identical performance to those of existing locally optimum algorithms with much reduced complexity.

This letter proposes an unsupervised DL approach for the C-RAN systems by handling the joint optimization task of transmit beamforming and fronthaul quantization. Compared to existing DL studies \cite{Hao-et-al:Access18,Kim-et-al:WCL20} focusing on conventional cellular systems with the sum power constraint, the special nature the C-RANs imposes the per-AP power budget, the fronthaul capacity constraints, and additional optimization variables regarding the fronthaul quantization. These pose nontrivial challenges in designing efficient structure of DNNs suitable for the C-RAN architecture. Therefore, the conventional DL-based beamforming optimization methods cannot be straightforwardly applied to our scenario. 

To this end, we develop a structural learning process which constructs a DNN to always provide feasible beamforming vectors and fronthaul quantization policies. The proposed DNN generates intermediate variables that optimally recover the beamforming vectors. The quantization strategy is then determined by the learned beamforming solutions. As a result, the DNN can be trained in an unsupervised manner without the information of the optimal solutions. Numerical results validate the advantages of the proposed DL method.

The remainder of this letter is organized as follows. In Sec. \ref{sec:System-Model}, we describe a downlink C-RAN system, and beamforming and fronthaul quantization optimization problem is formulated under constraints on per-AP power and fronthaul capacity. The proposed DL method will be detailed in Sec. \ref{sec:Proposed-DL}. Then, advantages of the proposed DL method are validated via numerical results. Finally, we conclude this letter with discussion of future works in Sec. \ref{sec:Conclusion}.

\section{System model and Problem Definition\label{sec:System-Model}}

Consider a downlink C-RAN in which a CP communicates with $K$ single-antenna user equipments (UEs) by controlling $M$ single-antenna APs. Let $\mathcal{M} \triangleq \{1,\ldots,M\}$ and $\mathcal{K} \triangleq \{1,\ldots,K\}$ be the sets of APs' and UEs' indices, respectively. Each AP $i\in\mathcal{M}$ is connected to the CP through a fronthaul link of capacity $C$ in bit/symbol.
The received signal of UE $k\in\mathcal{K}$ is written as 
\begin{align}
    y_k = \mathbf{h}_k^H \mathbf{x} + z_k, \label{eq:received-signal-downlink}
\end{align}
where $\mathbf{h}_k\in\mathbb{C}^{M}$ denotes the channel from APs to UE $k$, $\mathbf{x}\in\mathbb{C}^{M}$ represents the signal vector transmitted by all APs, and $z_k\sim\mathcal{CN}(0,1)$ is the additive noise at UE $k$. The transmitted signal $\mathbf{x}$ is subject to per-AP power constraints expressed as
\begin{align}
    \mathtt{E}\left[|x_i|^2\right] \leq P, \, i\in\mathcal{M}, \label{eq:per-AP-power-constraint}
\end{align}
where $x_i$ is the $i$th element of $\mathbf{x}$ representing the signal radiated by AP $i$ and $P$ stands for the power budget at each~AP.

The CP generates the transmit signal vector $\mathbf{x}$ by employing a cooperative linear beamforming followed by fronthaul quantization \cite{Park-et-al:TSP13}. The transmitted signal $\mathbf{x}$ is then modeled~as
\begin{align}
\mathbf{x} = \sum\nolimits_{k\in\mathcal{K}} \mathbf{v}_k s_k + \mathbf{q},\label{eq:quantized-signal}
\end{align}
where $s_k\sim\mathcal{CN}(0,1)$ and $\mathbf{v}_k\in\mathbb{C}^{M}$ denote the data signal and beamforming vector for UE $k$, respectively, and $\mathbf{q} \in \mathbb{C}^{M}\sim\mathcal{CN}(\mathbf{0},\boldsymbol{\Omega})$ with covariance matrix $\boldsymbol{\Omega}\in\mathbb{C}^{M\times M}$ models the quantization noise vector independent of $\mathbf{x}$ under Gaussian test channel. We employ an independent fronthaul quantization scheme where each signal $x_{i}$ is individually compressed across APs $i\in\mathcal{M}$. Then, $\boldsymbol{\Omega}$ is given by a diagonal matrix. Let $\omega_{i}\geq0$ be the $i$th diagonal element of $\mathbf{\Omega}$, i.e., $\boldsymbol{\Omega} = \text{diag}(\{\omega_i\}_{i\in\mathcal{M}})$, which represents the quantization noise power for the fronthaul link toward AP $i$. Due to the limited fronthaul capacity $C$, the following constraint should be satisfied for successful decompression of $x_i$ at AP~$i$~\cite{Gamal-et-al:Book2011}.
\begin{align}
    \log_2\Big( 1 + \Big(\sum\nolimits_{k\in\mathcal{K}} |v_{k,i}|^2 \Big)/ \omega_i \Big) \leq C, \label{eq:fronthaul-capacity-constraint}
\end{align}
where $v_{k,i}$ indicates the $i$th element of $\mathbf{v}_k$.
Defining $\mathbf{v}\triangleq \{\mathbf{v}_k\}_{k\in\mathcal{K}}$ and $\boldsymbol{\omega}\triangleq \{\omega_i\}_{i\in\mathcal{M}}$, the achievable rate of UE $k$ $f_{k}(\mathbf{v},\boldsymbol{\omega})$ can be written as 
\begin{align}
    \!\!f_k\!\left(\mathbf{v}, \boldsymbol{\omega}\right)
    \!= \log_2\!\!\left(\!\!1 + \frac{|\mathbf{h}_k^H \mathbf{v}_k|^2}{ 1 + \mathbf{h}_k^H \boldsymbol{\Omega}\mathbf{h}_k \!\! + \!\sum_{l\in\mathcal{K}\setminus\{k\}} \! |\mathbf{h}_k^H \mathbf{v}_l|^2 }\!\!\right)\!. \label{eq:achievable-rate-UE-k}
\end{align}

We jointly optimize the beamforming vectors $\mathbf{v}$ and quantization noise powers $\boldsymbol{\omega}$ for maximizing the sum-rate performance $f(\mathbf{v},\boldsymbol{\omega})\triangleq\sum_{k\in\mathcal{K}}f_k(\mathbf{v},\boldsymbol{\omega})$ while satisfying the transmit power budget (\ref{eq:per-AP-power-constraint}) and fronthaul capacity constraints (\ref{eq:fronthaul-capacity-constraint}). 
In addition to the CSI $\mathbf{h}$, the constraints $P$ and $C$ are regarded as important system parameters that possibly vary at each transmission, thereby affecting the optimization procedure. The corresponding problem is formulated as
\begin{subequations}\label{eq:problem}
\begin{align}
\underset{\mathbf{v},\boldsymbol{\omega}}{\mathrm{max}}\,\,\,&f(\mathbf{v},\boldsymbol{\omega})\label{eq:problem-objective}\\
\mathrm{s.t.}\,\,\,\,\, & \sum\nolimits_{k\in\mathcal{K}} |v_{k,i}|^2+\omega_i\leq P,\,\,\,\,
i\in\mathcal{M},\forall P, \forall C,\label{eq:problem-power}%
\\ \,\,\,\,\, & \sum\nolimits_{k\in\mathcal{K}}|v_{k,i}|^2\leq \beta\omega_i,\,\,\,\,i\in\mathcal{M},\forall P, \forall C,\label{eq:problem-fronthaul}
\end{align}
\end{subequations}
where the per-AP power constraint (\ref{eq:problem-power}) is obtained by substituting \eqref{eq:quantized-signal} into (\ref{eq:per-AP-power-constraint}), and \eqref{eq:problem-fronthaul} comes from \eqref{eq:fronthaul-capacity-constraint} with a weight for consuming transmit power of $\boldsymbol{\omega}$ at each AP defined by  $\beta\triangleq2^{C}-1$. Both constraints should be achieved for any given $P$ and $C$ so that the resulting beamformer $\mathbf{v}$ and the quantization strategy $\boldsymbol{\omega}$ become feasible for arbitrary system configurations. It is not easy to find the globally optimum solution to \eqref{eq:problem} due to the nonconvex objective function \eqref{eq:problem-objective}. A locally optimal solution can be obtained by the WMMSE algorithm \cite{Yu-et-al:WCL19}, but its iterative nature results in high computational burden for practical C-RAN systems. 

To this end, we propose a low-complexity solution to \eqref{eq:problem} using DL techniques. Due to the absence of the optimal solution, instead of employing supervised learning approaches \cite{Sun-et-al:TSP18,Zhang-et-al:TWC20}, our focus is on identifying unsupervised DL framework, which can be implemented without the knowledge of the optimal solution of problem \eqref{eq:problem}. The DL-based beamforming schemes have been recently presented in \cite{Zhang-et-al:TWC20, Hao-et-al:Access18, Kim-et-al:WCL20}, for conventional cellular networks with co-located antennas. Due to the implicit assumption of the infinite fronthaul capacity $C=\infty$, the fronthaul quantization issue has not been addressed in designing DNN architecture and its training strategy. In the following sections, we develop a new DL method which tackles the intrinsic properties of the C-RAN systems, i.e., the per-AP power constraint and fronthaul capacity limitations.

\section{Proposed Deep Learning Method\label{sec:Proposed-DL}}

We first recast the original problem \eqref{eq:problem} into a {\em functional optimization} formulation \cite{DLiu:20} suitable for generalized learning for environment's status $\{\mathbf{h},P,C\}$. It transforms the target of the optimization into a function representing an optimization procedure. Any formulations with specified inputs and outputs can be refined to functional optimization tasks. Problem \eqref{eq:problem} can be viewed as an identification procedure of solutions $\mathbf{v}$ and $\boldsymbol{\omega}$ for arbitrary given channel $\mathbf{h}$ and system parameters $P$ and $C$. Such an input-output relationship can be captured by a functional operator $\{\mathbf{v}, \boldsymbol{\omega}\} = \mathcal{V}(\mathbf{h}, P, C)$. The operator $\mathcal{V}(\cdot)$ will be designed by a proper DNN. Substituting this into \eqref{eq:problem} yields the functional optimization expressed by
\begin{subequations} \label{eq:problem-stochastic}
\begin{align}
&\underset{\mathcal{V}(\cdot)}{\mathrm{max}}\ \mathtt{E}_{\mathbf{h}, P, C}[f(\mathcal{V}(\mathbf{h}, P, C))], \label{eq:problem-mapping-objective}\\
&\mathrm{s.t.}\ \eqref{eq:problem-power}\ \text{and}\ \eqref{eq:problem-fronthaul},\label{eq:problem-mapping}
\end{align}
\end{subequations}
where $\mathtt{E}_{X}[\cdot]$ accounts for the expectation operator over a random variable $X$. The equivalence between \eqref{eq:problem} and \eqref{eq:problem-stochastic} is mathematically verified in \cite{DLiu:20} and the references therein. Unlike the original problem \eqref{eq:problem} which focuses on identifying the solution variables $\mathbf{v}$ and $\boldsymbol{\omega}$ for a certain $\{\mathbf{h},P,C\}$, the functional optimization in \eqref{eq:problem-stochastic} addresses the expected sum-rate maximization rather than its instantaneous value. Consequently, by solving \eqref{eq:problem-stochastic}, a generic mapping rule $\mathcal{V}(\cdot)$ for arbitrarily given input $\{\mathbf{h},P,C\}$ can be obtained.

The remaining work is to design a proper DNN that approximates the intractable operator $\mathcal{V}(\cdot)$ successfully. A straightforward approach is to construct a DNN taking $\{\mathbf{h}, P, C\}$ and $\{\mathbf{v},\boldsymbol{\omega}\}$ as input and output, respectively. We refer to this scheme as a direct learning (DiLearn) method. The DNN can be readily trained to maximize the average sum-rate through the standard stochastic gradient descent (SGD) algorithm. However, the performance of the DiLearn approach has been shown to be poor in various beamforming optimization tasks \cite{Zhang-et-al:TWC20, Kim-et-al:WCL20} even without the fronthaul constraint. This is mainly stemmed from the difficulties of training a DNN with a large number of output variables and the absence of expert knowledge assisting the design of a DNN. In our case, the DiLearn needs to find $2MK+M$ real-valued output variables, which is quite large particularly when both $M$ and $K$ increase. This motivates us to investigate an appropriate DNN structure having much reduced output dimension for addressing \eqref{eq:problem-stochastic} efficiently.

\subsection{Optimal Solution Structure} \label{sub:optimal-structure}

To design an efficient DL architecture, this subsection studies special properties of the optimal beamforming and quantization noise power. The following proposition states that the optimal $\boldsymbol{\omega}$ can be retrieved from the beamforming $\mathbf{v}$. 
\begin{prop} \label{prop:virtual-power-constraint}
The solutions $\mathbf{v}$ and $\boldsymbol{\omega}$ are feasible for 
\eqref{eq:problem} if
\begin{subequations}\label{eq:prop1}
\begin{align}
&\omega_i=\frac{1}{\beta} \sum\nolimits_{k\in\mathcal{K}} |v_{k,i}|^2,  i\in\mathcal{M}, \label{eq:optimal-quantization-noise-power-for-given-beamformer}\\
&\sum\nolimits_{k\in\mathcal{K}} |v_{k,i}|^2 \leq \frac{P}{1+1/\beta}, i\in\mathcal{M}. \label{eq:virtual-power-constraint-feasible}
\end{align}
\end{subequations}
\end{prop}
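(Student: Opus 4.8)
The plan is to establish the claim in two stages: first, to justify \eqref{eq:optimal-quantization-noise-power-for-given-beamformer} as the \emph{best} (hence ``optimal'') choice of quantization noise for a fixed beamformer $\mathbf{v}$, and second, to verify directly that the pair $(\mathbf{v},\boldsymbol{\omega})$ built from \eqref{eq:optimal-quantization-noise-power-for-given-beamformer}--\eqref{eq:virtual-power-constraint-feasible} satisfies both constraints of \eqref{eq:problem}.

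For the first stage, I would observe that in the rate \eqref{eq:achievable-rate-UE-k} the quantization powers enter only through $\mathbf{h}_k^H\boldsymbol{\Omega}\mathbf{h}_k=\sum\nolimits_{i\in\mathcal{M}}|h_{k,i}|^2\omega_i$, which appears in the denominator of the SINR of every UE $k$. Hence each $f_k(\mathbf{v},\boldsymbol{\omega})$, and therefore the sum-rate $f(\mathbf{v},\boldsymbol{\omega})$, is nonincreasing in each $\omega_i$. Consequently, for any fixed $\mathbf{v}$ an optimal $\boldsymbol{\omega}$ can be taken to make the fronthaul constraint \eqref{eq:problem-fronthaul} active at every AP: if $\sum\nolimits_{k\in\mathcal{K}}|v_{k,i}|^2<\beta\omega_i$ for some $i$, one may decrease $\omega_i$ down to $\tfrac{1}{\beta}\sum\nolimits_{k\in\mathcal{K}}|v_{k,i}|^2$ without violating \eqref{eq:problem-fronthaul} or \eqref{eq:problem-power} and without decreasing the objective. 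Enforcing equality in \eqref{eq:problem-fronthaul} gives exactly \eqref{eq:optimal-quantization-noise-power-for-given-beamformer}.

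For the second stage, I would simply substitute \eqref{eq:optimal-quantization-noise-power-for-given-beamformer} into the two constraints of \eqref{eq:problem}. Plugging $\omega_i=\tfrac{1}{\beta}\sum\nolimits_{k\in\mathcal{K}}|v_{k,i}|^2$ into \eqref{eq:problem-fronthaul} yields $\sum\nolimits_{k\in\mathcal{K}}|v_{k,i}|^2\le\sum\nolimits_{k\in\mathcal{K}}|v_{k,i}|^2$, which holds with equality; plugging the same expression into \eqref{eq:problem-power} gives $(1+1/\beta)\sum\nolimits_{k\in\mathcal{K}}|v_{k,i}|^2\le P$, i.e., \eqref{eq:virtual-power-constraint-feasible}. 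This proves the stated ``if'' direction, and since \eqref{eq:optimal-quantization-noise-power-for-given-beamformer} fixes $\omega_i$ in terms of $\mathbf{v}$, it is in fact an equivalence: once \eqref{eq:optimal-quantization-noise-power-for-given-beamformer} is imposed, feasibility of \eqref{eq:problem} reduces to the single per-AP ``virtual'' power budget \eqref{eq:virtual-power-constraint-feasible} on $\mathbf{v}$ alone, which is the key structural fact to be exploited by the DNN design.

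There is no genuinely hard step; the only point needing a word of care is the monotonicity argument when some $h_{k,i}=0$, in which case $f$ is merely nonincreasing (not strictly decreasing) in $\omega_i$ — still enough to guarantee that \emph{some} optimal solution activates \eqref{eq:problem-fronthaul}, which is all that is used. One may also note in passing that $\beta=2^C-1>0$ for $C>0$, so $1/\beta$ in \eqref{eq:optimal-quantization-noise-power-for-given-beamformer}--\eqref{eq:virtual-power-constraint-feasible} is well defined.
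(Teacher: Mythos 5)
Your verification is correct and is essentially the paper's own argument: substituting \eqref{eq:optimal-quantization-noise-power-for-given-beamformer} makes \eqref{eq:problem-fronthaul} hold with equality, and combining it with \eqref{eq:virtual-power-constraint-feasible} gives $\sum\nolimits_{k\in\mathcal{K}}|v_{k,i}|^2+\omega_i\le P$, exactly as in the paper (which writes the bound as $\tfrac{P}{1+1/\beta}+\tfrac{P}{1+\beta}=P$ rather than your $(1+1/\beta)\sum_k|v_{k,i}|^2\le P$, an algebraically identical step). Your first stage on the monotonicity of $f_k$ in $\omega_i$ is not needed for the stated feasibility claim; the paper makes that optimality observation separately in the remark following the proposition, so it is harmless but extraneous to the proof itself.
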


\begin{proof}
We will show that $\mathbf{v}$ and $\boldsymbol{\omega}$ satisfying \eqref{eq:prop1} are indeed feasible for \eqref{eq:problem}. By substituting \eqref{eq:optimal-quantization-noise-power-for-given-beamformer} into \eqref{eq:problem-fronthaul}, it is easy to see that \eqref{eq:problem-fronthaul} is satisfied with equality. Also, the feasibility for \eqref{eq:problem-power} is shown by combining \eqref{eq:optimal-quantization-noise-power-for-given-beamformer} and \eqref{eq:virtual-power-constraint-feasible}, it follows $\omega_{i}\leq\frac{P}{1+\beta}$ resulting in
\begin{align}
    \sum\nolimits_{k\in\mathcal{K}}|v_{k,i}|^2+\omega_{i}\leq\frac{P}{1+1/\beta}+\frac{P}{1+\beta}=P.
\end{align}
We thus attain \eqref{eq:problem-power}. This completes the proof.
\end{proof}
Notice that, for a given $\mathbf{v}$, $\omega_i$ in \eqref{eq:optimal-quantization-noise-power-for-given-beamformer} is indeed optimal for \eqref{eq:problem} since the individual rate $f_{k}(\mathbf{v},\boldsymbol{\omega})$ in (\ref{eq:achievable-rate-UE-k}) is a monotonically decreasing function for each $\omega_{i}$. Therefore, the optimal $\omega_{i}$ is readily obtained from \eqref{eq:optimal-quantization-noise-power-for-given-beamformer} once the beamforming solution $\mathbf{v}$ is optimized. This implies that the corresponding DNN architecture can be designed to produce $\mathbf{v}$ only.

With the optimal $\boldsymbol{\omega}$ at hands, \eqref{eq:problem-power} and \eqref{eq:problem-fronthaul} can be combined into a sole constraint \eqref{eq:virtual-power-constraint-feasible}. As will be explained, this leads to a simple implementation of the proposed DNN. In addition, the left-hand side of (\ref{eq:virtual-power-constraint-feasible}) measures the beamforming power consumed at AP $i$. Therefore, (\ref{eq:virtual-power-constraint-feasible}) can be regarded as a virtual power constraint at AP $i$ compensating for the finite fronthaul capacity $C$. Based on this intuition, we present the following proposition which shows the optimal beamforming structure under the per-AP power constraints for arbitrary given fronthaul quantization processes.

\begin{prop} \label{prop:optimal-beamforming-structure}
Under the per-AP transmit power constraints (\ref{eq:virtual-power-constraint-feasible}), the optimal beamforming structure for a given $\boldsymbol{\omega}$
can be written by $\mathbf{v}_k=\sqrt{p_k}\mathbf{u}_k$, $k\in\mathcal{K}$, where $p_k$ and $\mathbf{u}_k \in\mathbb{C}^{M}$ with $||\mathbf{u}_k||^2 = 1$ stand for the transmit power and the beam direction for UE $k$, respectively. Here, $\mathbf{u}_k$ can be parameterized by $K+M$ nonnegative real numbers $\boldsymbol{\lambda}=\{\lambda_k\}_{k\in\mathcal{K}}$ and $\boldsymbol{\mu}=\{\mu_i\}_{i\in\mathcal{M}}$~as
\begin{equation}
\mathbf{u}_k=\frac{(\sum_{l\in\mathcal{K}}\lambda_l\mathbf{h}_l\mathbf{h}_l^H+\mathrm{diag}(\mathbf{\boldsymbol{\mu}}))^{-1}\mathbf{h}_k}{\|(\sum_{l\in\mathcal{K}}\lambda_l\mathbf{h}_l\mathbf{h}_l^H+\mathrm{diag}(\mathbf{\boldsymbol{\mu}}))^{-1}\mathbf{h}_k\|}, k\in\mathcal{K}.\label{eq:optimal-beamforming-structure}
\end{equation}
\end{prop}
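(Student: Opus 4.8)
The plan is to characterize the optimal $\mathbf{v}$ for fixed $\boldsymbol{\omega}$ through the Karush--Kuhn--Tucker (KKT) stationarity conditions, which are necessary at the optimum since the feasible set \eqref{eq:virtual-power-constraint-feasible} is convex with nonempty interior (Slater holds at $\mathbf{v}=\mathbf{0}$). First I would fix $\boldsymbol{\omega}$, so that each quantization term $\mathbf{h}_k^H\boldsymbol{\Omega}\mathbf{h}_k$ in \eqref{eq:achievable-rate-UE-k} is a known constant and the task reduces to maximizing $f(\mathbf{v},\boldsymbol{\omega})$ over $\mathbf{v}$ subject only to the $M$ constraints \eqref{eq:virtual-power-constraint-feasible}. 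Attaching multipliers $\mu_i\ge 0$ to these constraints, I would form the Lagrangian $L(\mathbf{v},\boldsymbol{\mu})=f(\mathbf{v},\boldsymbol{\omega})-\sum_{i\in\mathcal{M}}\mu_i\bigl(\sum_{k\in\mathcal{K}}|v_{k,i}|^2-P/(1+1/\beta)\bigr)$ and impose $\partial L/\partial\mathbf{v}_k^{*}=\mathbf{0}$ for every $k$. Using the complex gradient of a $\log$-SINR, the derivative of UE $k$'s own rate contributes a term proportional to $\mathbf{h}_k\mathbf{h}_k^H\mathbf{v}_k=(\mathbf{h}_k^H\mathbf{v}_k)\mathbf{h}_k$, the derivatives of the interference terms inside the rates $f_l$, $l\neq k$, contribute $-\sum_{l\neq k}a_l\,\mathbf{h}_l\mathbf{h}_l^H\mathbf{v}_k$, and the penalty contributes $-\mathrm{diag}(\boldsymbol{\mu})\mathbf{v}_k$; crucially, each $a_l=\tfrac{1}{\ln 2}\,|\mathbf{h}_l^H\mathbf{v}_l|^2/\bigl(D_l(D_l+|\mathbf{h}_l^H\mathbf{v}_l|^2)\bigr)$, with $D_l$ the SINR denominator of UE $l$, depends only on $l$ and is nonnegative.

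Collecting these pieces, the stationarity equation reads $\bigl(\sum_{l\neq k}a_l\mathbf{h}_l\mathbf{h}_l^H+\mathrm{diag}(\boldsymbol{\mu})\bigr)\mathbf{v}_k=\xi_k\mathbf{h}_k$ for a scalar $\xi_k$ that may be taken positive after an irrelevant phase rotation of $\mathbf{v}_k$. The key manipulation is then to add $a_k\mathbf{h}_k\mathbf{h}_k^H\mathbf{v}_k$ to both sides: since that vector is itself a scalar multiple of $\mathbf{h}_k$, the right-hand side remains proportional to $\mathbf{h}_k$ while the left-hand matrix becomes $\sum_{l\in\mathcal{K}}a_l\mathbf{h}_l\mathbf{h}_l^H+\mathrm{diag}(\boldsymbol{\mu})$, a single matrix whose sum now runs over all users. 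Inverting it (positive semidefinite in general, with the inverse well defined after an arbitrarily small regularization that does not alter the structure) yields $\mathbf{v}_k=\tilde{\xi}_k\bigl(\sum_{l\in\mathcal{K}}a_l\mathbf{h}_l\mathbf{h}_l^H+\mathrm{diag}(\boldsymbol{\mu})\bigr)^{-1}\mathbf{h}_k$. Setting $\lambda_l:=a_l\ge 0$, $\mathbf{u}_k:=\mathbf{v}_k/\|\mathbf{v}_k\|$, and $p_k:=\|\mathbf{v}_k\|^2$ reproduces exactly \eqref{eq:optimal-beamforming-structure}, the scalars $\tilde{\xi}_k$ being absorbed by the normalization so that only the $K+M$ numbers $\boldsymbol{\lambda}$ and $\boldsymbol{\mu}$ survive in each beam direction.

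The main obstacle I anticipate is the gradient bookkeeping together with the observation that makes everything fit: one must verify that the only $\mathbf{v}_k$-dependent part of $\partial f/\partial\mathbf{v}_k^{*}$ not collapsing onto the rank-one direction $\mathbf{h}_k$ is precisely $\sum_{l\neq k}a_l\mathbf{h}_l\mathbf{h}_l^H\mathbf{v}_k$ with $l$-only nonnegative coefficients, and then recognize that the missing $l=k$ term is again rank-one along $\mathbf{h}_k$ and can be shuttled across the equality to symmetrize the sum. A minor technical point is invertibility of $\sum_l\lambda_l\mathbf{h}_l\mathbf{h}_l^H+\mathrm{diag}(\boldsymbol{\mu})$ when some $\mu_i=0$, which is handled either by genericity of the channels or by the standard $\epsilon\mathbf{I}$ limiting argument and does not affect the stated form. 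Equivalently, one can obtain the same conclusion through the WMMSE reformulation of \cite{Yu-et-al:WCL19}, whose per-AP-power beamforming subproblem is a convex quadratically constrained program with exactly this KKT structure.
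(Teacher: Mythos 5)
Your derivation is correct, but it takes a different route from the paper. The paper does not rederive the structure from first principles: it fixes $\boldsymbol{\omega}$, absorbs the quantization noise into an effective noise $\sigma_k^2=1+\mathbf{h}_k^H\boldsymbol{\Omega}\mathbf{h}_k$, rescales $\tilde{\mathbf{h}}_k=\mathbf{h}_k/\sigma_k$ and $\tilde{P}=P/(1+1/\beta)$ so that the problem becomes exactly the per-antenna-power sum-rate maximization of Yu--Lan \cite{Yu-Lan:TSP07} (also used in \cite{Zhang-et-al:TWC20}), invokes the known optimal structure $\mathbf{u}_k\propto(\sum_l\tilde{\lambda}_l\tilde{\mathbf{h}}_l\tilde{\mathbf{h}}_l^H+\mathrm{diag}(\boldsymbol{\mu}))^{-1}\tilde{\mathbf{h}}_k$ there, and then undoes the normalization via the substitution $\lambda_l=\tilde{\lambda}_l/\sigma_l^2$ to land on \eqref{eq:optimal-beamforming-structure}. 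You instead prove the structure directly: KKT stationarity (valid here because the constraints \eqref{eq:virtual-power-constraint-feasible} are convex with a Slater point, even though the objective is nonconvex), the gradient bookkeeping giving interference coefficients $a_l\ge 0$ that depend only on $l$ (with $\sigma_l^2$ absorbed into the denominators $D_l$, which is why you never need the paper's explicit channel rescaling), and the rank-one trick of moving the $l=k$ term across the equality before inverting. This is essentially a self-contained rederivation of the result the paper imports from \cite{Yu-Lan:TSP07}; it buys transparency and makes explicit that $\boldsymbol{\lambda}$ and $\boldsymbol{\mu}$ are shared across users (the $\mu_i$ being the per-AP multipliers), at the cost of having to handle yourself the minor technicalities you correctly flag (the irrelevant phase of $\mathbf{v}_k$, users with $\mathbf{h}_k^H\mathbf{v}_k=0$, and singularity of the matrix when some $\mu_i=0$), which in the paper's route are hidden inside the cited reference. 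Neither route has a gap for the statement as claimed, since the proposition only asserts existence of a nonnegative parameterization of the optimal directions.
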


\begin{proof}
The proof follows a similar procedure in \cite{Yu-Lan:TSP07}. For any given $\boldsymbol{\omega}$, the optimal $\mathbf{v}$ of problem \eqref{eq:problem} can be obtained by solving the following problem.
\begin{subequations}\label{eq:prop2-2}
\begin{align}
\underset{\mathbf{v}}{\mathrm{max}}\,\,\,&\sum_{k\in\mathcal{K}}\log_2\left(1+\frac{|\tilde{\mathbf{h}}_k^H\mathbf{v}_k|^2}{1+\sum_{l\in\mathcal{K}\backslash\{k\}}|\tilde{\mathbf{h}}_k^H\mathbf{v}_l|^2}\right)\\
\mathrm{s.t.}\,\,\,\,\, & \sum_{k\in\mathcal{K}} |v_{k,i}|^2\leq \tilde{P},\,\,\,\,
i\in\mathcal{M}.\label{eq:problem-power_2-3-1}%
\end{align}
\end{subequations}
where $\tilde{\mathbf{h}}_k=\mathbf{h}_k/\sigma_k$, $\sigma_k^2=1+\mathbf{h}_k^H\boldsymbol{\Omega}\mathbf{h}_k$, and $\tilde{P}=P/(1+1/\beta)$. Problem (\ref{eq:prop2-2}) can be interpreted as the sum-rate maximization problem for a multi-user downlink system with per-antenna power constraints and constant noise power across users addressed in \cite{Yu-Lan:TSP07}. According to \cite{Zhang-et-al:TWC20, Yu-Lan:TSP07}, the optimal beamforming solution for problem (\ref{eq:prop2-2}) has a structure of
\begin{align}
\mathbf{v}_k=\sqrt{p_k}\mathbf{u}_k,
\end{align}
where $p_k\geq0$ is the power allocated to UE $k$, and $\mathbf{u}_k$ is the beamforming direction for UE $k$ given as
\begin{align}\label{eq:beam_structure1}
\mathbf{u}_k=\frac{(\sum_{l\in\mathcal{K}}\tilde{\lambda}_l\tilde{\mathbf{h}}_l\tilde{\mathbf{h}}_l^H+\mathrm{diag}(\boldsymbol{\mu}))^{-1}\tilde{\mathbf{h}}_k}{\|(\sum_{l\in\mathcal{K}}\tilde{\lambda}_l\tilde{\mathbf{h}}_l\tilde{\mathbf{h}}_l^H+\mathrm{diag}(\boldsymbol{\mu}))^{-1}\tilde{\mathbf{h}}_k\|},\,\,k\in\mathcal{K},
\end{align}
with nonnegative real variables $\tilde{\boldsymbol{\lambda}}$ and $\boldsymbol{\mu}$.

Substituting $\lambda_k=\tilde{\lambda}_k/\sigma_k^2$ and $\tilde{\mathbf{h}}_k=\mathbf{h}_k/\sigma_k$ into the direction vector in (\ref{eq:beam_structure1}), we obtain \eqref{eq:optimal-beamforming-structure}. This completes the proof. 
\end{proof}

Proposition \ref{prop:optimal-beamforming-structure} identifies a low-dimensional representation of the optimal beamforming $\mathbf{v}$. It reveals that, for a given $\boldsymbol{\omega}$, the beamforming vectors can be efficiently retrieved from $2K+M$ real-valued parameters $\mathbf{p}\triangleq\{p_k\}_{k\in\mathcal{K}}$, $\boldsymbol{\lambda}$, and $\boldsymbol{\mu}$. Thus, we can further reduce the size of DNN such that it outputs only $2K+M$ nonnegative variables $\{\mathbf{p}, \boldsymbol{\lambda}, \boldsymbol{\mu}\}$. Combining this with Proposition 1, the optimal quantization noise variance $\boldsymbol{\omega}$ can also be recovered from $\{\mathbf{p}, \boldsymbol{\lambda}, \boldsymbol{\mu}\}$ by using \eqref{eq:optimal-quantization-noise-power-for-given-beamformer}. Therefore, compared to the DiLearn method, the number of output variables of DNN has been reduced from $2MK\!+\!M$ to~$2K\!+\!M$.

Proposition \ref{prop:optimal-beamforming-structure} only finds an alternative parameterization of the optimal solutions, but not the determination processes of the intermediate variables $\{\mathbf{p}, \boldsymbol{\lambda}, \boldsymbol{\mu}\}$. Classical optimization techniques cannot be straightforwardly applied to identify those parameters due to their highly coupled structure in \eqref{eq:optimal-beamforming-structure}. We address this issue by exploiting data-driven DL techniques. 


\subsection{Proposed DL Methods} \label{eq:proposed-structure-training}

\begin{figure}
\centering\includegraphics[width=0.7\linewidth]{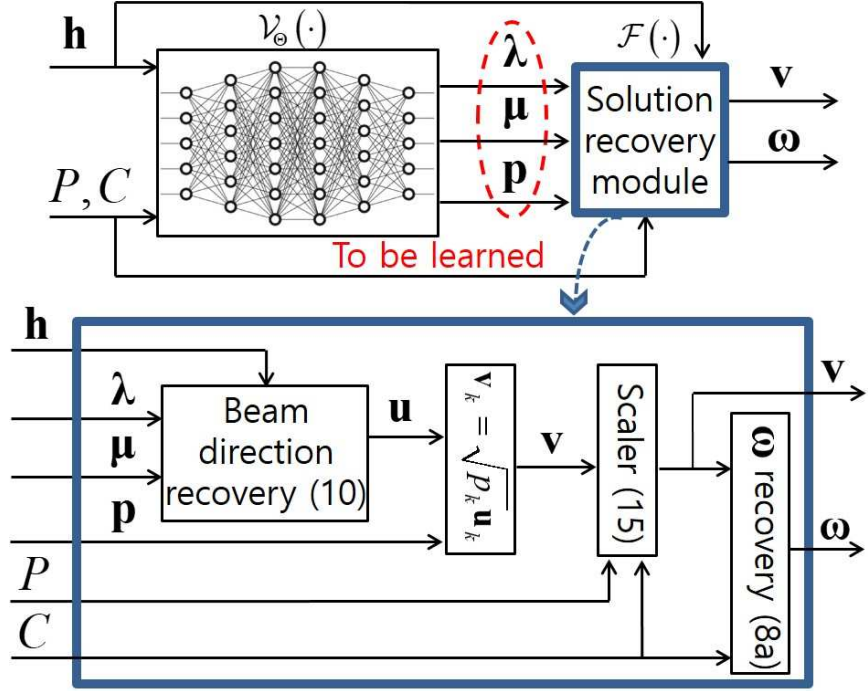}
\caption{{\label{fig:DL-structure}Proposed DL architecture}}
\end{figure}

Fig. \ref{fig:DL-structure} presents the proposed DL architecture which consists of two consecutive modules: DNN $\mathcal{V}_{\Theta}(\cdot)$ with trainable parameter $\Theta$ and solution recovery module $\mathcal{F}(\cdot)$. The training dataset contains numerous realizations of three-tuple $\{\mathbf{h},P,C\}$. The DNN accepts an input feature $\{\mathbf{h},P,C\}$ sampled from the training set and computes an output $\{\mathbf{p}, \boldsymbol{\lambda}, \boldsymbol{\mu}\}$, i.e., $\{\mathbf{p}, \boldsymbol{\lambda}, \boldsymbol{\mu}\}=\mathcal{V}_{\Theta}(\mathbf{h},P,C)$. For $l\in\mathcal{L}\triangleq\{1,\cdots,L\}$, the computation of layer $l$ is given as
\begin{align}
    \mathbf{d}_l = g_l\left( \mathrm{BN}\left( \mathbf{W}_l \mathbf{d}_{l-1} + \mathbf{b}_l \right)\right), \forall l\in\mathcal{L}, \label{eq:operation-hidden-layer}
\end{align}
where $g_l(\cdot)$ indicates the activation function for layer $l$, $\mathbf{W}_l\in\mathbb{R}^{S_{l}\times S_{l-1}}$ and $\mathbf{b}_{l}\in\mathbb{R}^{S_{l}}$ are weight matrix and bias vector, respectively, which collectively form the trainable parameter set $\Theta = \{\mathbf{W}_l, \mathbf{b}_l\}_{l\in\mathcal{L}}$. The batch normalization operation \cite{Ioffe-et-al:arXiv15} denoted by $\mathrm{BN}(\cdot)$ is included to accelerate the training step.
The final output of the DNN $\mathbf{d}_{L}$ of length $S_{L}=2K+M$ is represented by $\mathbf{d}_{L}=\{\mathbf{p}, \boldsymbol{\lambda}, \boldsymbol{\mu}\}$. The sequential calculations \eqref{eq:operation-hidden-layer} define the DNN mapping $\{\mathbf{p}, \boldsymbol{\lambda}, \boldsymbol{\mu}\}=\mathcal{V}_{\Theta}(\mathbf{h}, P, C)$.

The recovery module $\mathcal{F}(\cdot)$ further processes the DNN output $\{\mathbf{p}, \boldsymbol{\lambda}, \boldsymbol{\mu}\}$ to retrieve feasible solutions $\mathbf{v}$ and $\boldsymbol{\omega}$ as $\{\mathbf{v},\boldsymbol{\omega}\}=\mathcal{F}(\mathbf{p}, \boldsymbol{\lambda}, \boldsymbol{\mu})$. As illustrated in Fig. \ref{fig:DL-structure}, the beam direction vector $\mathbf{u}\triangleq\{\mathbf{u}_{k}\}_{k\in\mathcal{K}}$ is first obtained from the optimal structure \eqref{eq:optimal-beamforming-structure}, and then it is followed by pairwise multiplication $\mathbf{v}_{k}=\sqrt{p_{k}}\mathbf{u}_{k}$. To guarantee the feasibility of $\mathbf{v}$, we perform a simple scaling inspired by our analysis (\ref{eq:virtual-power-constraint-feasible}).
\begin{align}
    \mathbf{v} \leftarrow  \frac{\sqrt{P/(1 + 1/\beta)}}{\sqrt{\max_{i\in\mathcal{M}}\sum_{k\in\mathcal{K}} |v_{k,i}|^2}}   \mathbf{v}. \label{eq:scaling}
\end{align}
As discussed in Proposition \ref{prop:virtual-power-constraint}, the resulting $\mathbf{v}$ from \eqref{eq:scaling} becomes feasible to the original formulation \eqref{eq:problem}. The optimal quantization noise variance $\boldsymbol{\omega}$ is then computed according to (\ref{eq:optimal-quantization-noise-power-for-given-beamformer}). Finally, the proposed DL structure models the optimization function $\mathcal{V}(\cdot)$ in \eqref{eq:problem-mapping-objective} as
\begin{align}
    \mathcal{V}(\mathbf{h},P,C)=\mathcal{F}(\mathcal{V}_{\Theta}(\mathbf{h},P,C)).\label{eq:DNN}
\end{align}

Plugging this into \eqref{eq:problem-mapping-objective} results in a training problem written~by
\begin{align}
\underset{\Theta}{\mathrm{max}}\ \mathtt{E}_{\mathbf{h}, P, C}\big[f\big(\mathcal{F}(\mathcal{V}_{\Theta}(\mathbf{h},P,C))\big)\big].\label{eq:training}
\end{align}
Thanks to the scaling \eqref{eq:scaling}, both the transmit power and fronthaul capacity constraints in \eqref{eq:problem-mapping} can be lifted out in \eqref{eq:training}. The training problem \eqref{eq:training} can be readily addressed by the mini-batch SGD method, e.g., the Adam algorithm \cite{Kingma-et-al:ICLR15}. It iteratively updates the DNN parameter $\Theta$ by using the sample gradient evaluated over the mini-batch set $\mathcal{B}$ randomly sampled from the training dataset. The DNN parameter $\Theta^{[n]}$ obtained at the $n$th iteration is written by
\begin{align}
\Theta^{[n]} \! = \! \Theta^{[n-1]} \! - \! \gamma\mathtt{E}_\mathcal{B}[\bigtriangledown_{\Theta^{[n-1]}} f(\mathcal{F}(\mathcal{V}_{\Theta^{[n-1]}}(\mathbf{h},P,C)))],
\label{eq:update-rule}
\end{align}
where $\gamma>0$ denotes learning rate.

Unlike the supervised DL-based beamforming DNN \cite{Zhang-et-al:TWC20}, which relies on the optimal solutions generated from the iterative algorithms, the proposed training policy \eqref{eq:update-rule} does not require any prior knowledge of the nonconvex problem \eqref{eq:problem}, i.e., optimal $\mathbf{v}$, $\boldsymbol{\omega}$ of problem \eqref{eq:problem}. Thus, the proposed DL approach can be carried out in a fully unsupervised manner, resulting in a simple implementation of the training step. Notice that the training step is carried out in an offline manner before the real-time C-RAN deployment. Once the DNN is trained, the CP exploits the optimized parameter set $\Theta$ to calculate the solutions $\mathbf{v}$ and $\boldsymbol{\omega}$ from \eqref{eq:DNN} for new channel inputs.
\subsection{Complexity Analysis}\label{sub:complexity-analysis}
The proposed DL structure \eqref{eq:DNN} consists of matrix multiplications \eqref{eq:operation-hidden-layer} and beamforming recovery operation \eqref{eq:optimal-beamforming-structure}. We have found that about $13MK$ hidden neurons are sufficient for achieving a good performance-complexity trade-off. In this case, the overall time complexity of the DNN is given by $\mathcal{O}(M^2K^2+M^3)$. The WMMSE algorithm requires to solve convex semidefinite program repeatedly. Assuming $L_\text{WMMSE}$ iterations, the complexity of the WMMSE algorithm becomes $\mathcal{O}(L_\text{WMMSE}(MK+M)^{4.5})$, which is much higher than that of the proposed DL method. The complexity comparison will be numerically shown in Sec. \ref{sec:Numerical-Results}.

\section{Numerical Results\label{sec:Numerical-Results}}

This section provides numerical results validating the effectiveness of the proposed DL method. We consider $M=6$ APs and $K=6$ UEs uniformly distributed within a cell of radius 100 m. The one-ring channel model \cite{Yin-et-al:JSTSP14} is assumed, where there are single-scattering paths scattered by $N$ scatterers positioned on a disk-shaped scattering ring centered on the UE. Then, the channel vector of each UE $k$ is modeled as $\mathbf{h}_k=\sum_{n\in\{1,...,N\}}\mathbf{h}_{k,n}/{\sqrt{N}}$, where $\mathbf{h}_{k,n}=[\sqrt{\beta_{k,n,1}}e^{-j2\pi\frac{d_{k,n,1}+r}{\lambda_\text{c}}}\,\,...\,\,\sqrt{\beta_{k,n,M}}e^{-j2\pi\frac{d_{k,n,M}+r}{\lambda_\text{c}}}]^Te^{j\rho_{k,n}}$ with the path-loss between AP $i$ and UE $k$ via scatterer $n$ of UE $k$ $\beta_{k,n,i}$, distance between AP $i$ and scatterer $n$ of UE $k$ $d_{k,n,i}$, radius of scattering ring  $r$, common phase shift $\rho_{k,n}$ and wave length of carrier $\lambda_\text{c}$. Here, for all the scattering paths, $\beta_{k,n,i}$ have been defined as $\beta_{k,n,i}=1/(1+((d_{k,n,i}+r)/d_0)^\eta)$ with the reference distance $d_0$ and path-loss exponent $\eta$. For the simulations, we set the parameters as $d_0=30\,
\mathrm{m}$, $r=5\,\mathrm{m}$, $\eta=3$, $N=2$, $\lambda_\text{c}=0.15\,\mathrm{m}$, and $\rho_{k,n}\sim\mathcal{U}(0,2\pi)$ for $\forall k$, $\forall n$.
With the unit variances of the additive noises, the signal-to-noise ratio (SNR) is equal to $P$. A DNN is constructed with $L=11$ layers in which each hidden layer has $S_l = 480$ neurons. For hidden layers, we adopt the leaky rectified linear unit (LReLU) activation, which is given as $\mathrm{LReLU}(z)=z$ for $z\geq0$ and $\mathrm{LReLU}(z)=0.3z$ otherwise. To produce the nonnegative output $\{\mathbf{p}, \boldsymbol{\lambda}, \boldsymbol{\mu}\}$, the output layer is realized by the sigplus activation $\mathrm{SigPlus}(z)=\log(1+e^{z})$. The Adam optimizer \cite{Kingma-et-al:ICLR15} with the mini-batch size $B=10^4$ is employed as the SGD algorithm. The training step \eqref{eq:update-rule} proceeds until the validation performance is saturated. The trained DNN is evaluated with $100$ test samples.

\subsection{Dataset Generation}
The training samples $\{\mathbf{h}, P, C\}$ are randomly generated according to given distributions. As described, the channel vectors $\mathbf{h}$ follow the one-ring channel model, and the constraint factors $P$ and $C$ are sampled from the uniform distribution as $10\log_{10}P\sim\mathcal{U}( 10\log_{10}P_{\min}, 10\log_{10}P_{\max} )$ and $C \sim \mathcal{U}(C_{\min}, C_{\max})$, where the bounding parameters $(P_{\min}$, $P_{\max})= (1, 10^3)$ and $(C_{\min}$, $C_{\max})= (2, 10)$.
\vspace{-3mm}
\subsection{Results}

We compare the performance of the proposed DL approach with the following benchmark schemes: \textit{i)} WMMSE algorithm: A locally optimal solution to problem (\ref{eq:problem}) is found using the iterative WMMSE algorithm \cite{Yu-et-al:WCL19}; \textit{ii)} DiLearn: A DNN is designed to yield the beamforming vectors $\mathbf{v}$ directly. It is followed by the scaling operation in (\ref{eq:scaling}) and the computation of $\boldsymbol{\omega}$ in (\ref{eq:optimal-quantization-noise-power-for-given-beamformer}).

\begin{figure}
\centering
\begin{subfigure}[b]{0.8\textwidth}
\includegraphics[width=\textwidth]{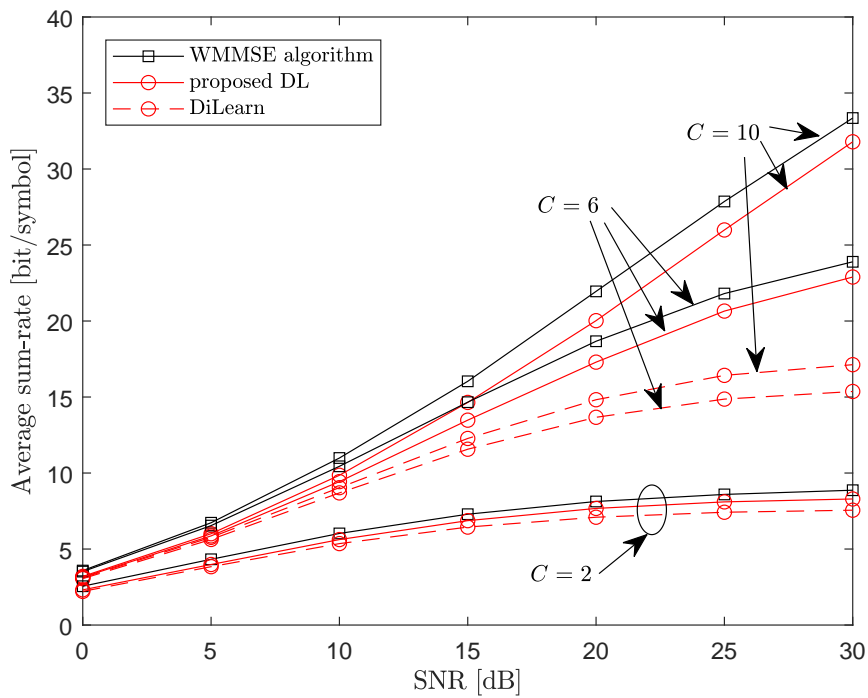}
\caption{Average sum-rate versus the SNR}\label{fig:vsSNR_MK66}
\end{subfigure}
\begin{subfigure}[b]{0.8\textwidth}
\includegraphics[width=\textwidth]{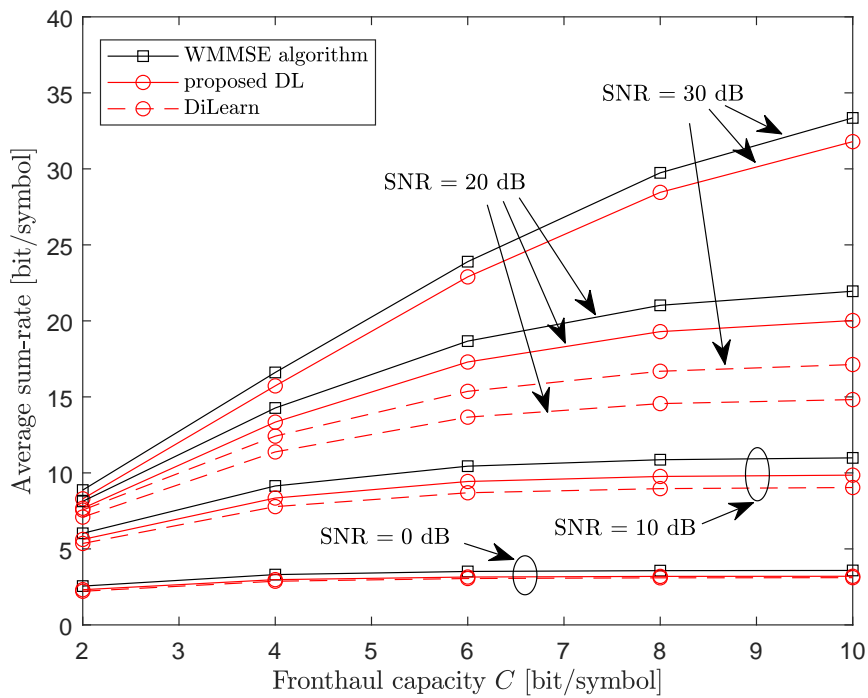}
\caption{Average sum-rate versus the fronthaul capacity $C$}\label{fig:vsC_MK66}
\end{subfigure}
\caption{Comparison of average sum-rate for $M=K=6$}\label{fig:MK66}
\end{figure}

In Fig. \ref{fig:MK66}, we evaluates the average sum-rate performance in the C-RAN with $M=K=6$. Fig. \ref{fig:vsSNR_MK66} depicts the average sum-rate performance by varying the SNR for $C\in\{2,\, 6,\, 10\}$. The proposed DL provides a good performance close to the WMMSE algorithm, whereas the DiLearn scheme exhibits severe performance loss. Similar observations can be made from Fig. \ref{fig:vsC_MK66} which presents the average sum-rate result with respect to the fronthaul capacity $C$ for $P\in\{0\,\,\mathrm{dB},10\,\,\mathrm{dB},20\,\,\mathrm{dB},30\,\,\mathrm{dB}\}$. The performance gap between the WMMSE and the DiLearn gets larger as $C$ and SNR grows. On the other hands, the proposed scheme shows only a slight loss compared to the WMMSE algorithm. This means that the DNN of the proposed scheme, which outputs only $2K+M=18$ variables, can be more efficiently trained than that of the DiLearn scheme whose output has $2MK=72$ variables.

\begin{table}[]
\caption{Average CPU run-time {[}sec{]} for $M=K=6$ with $C=10$}
\label{tab:CPU-time}
\hspace{+2mm}
\centering\begin{tabular}{|c|c|c|c|c|c|}
\hline
\multicolumn{4}{|c|}{WMMSE} & \multirow{2}{*}{proposed DL} & \multirow{2}{*}{DiLearn} \\ \cline{1-4}
0 dB & 10 dB & 20 dB & 30 dB &  &  \\ \hline
43.91 & 64.50 & 200.08 & 878.46 & 5.49$\times 10^{-3}$ & 5.29$\times 10^{-3}$ \\ \hline
\end{tabular}
\end{table}

Table \ref{tab:CPU-time} examines the advantage of the proposed DL scheme compared to the WMMSE algorithm in terms of the average CPU run-time at $C=10$. For the evaluations, both the trained DNNs and the WMMSE algorithm are implemented on a PC with an Intel i9-10900K CPU with 128 GB RAM using MATLAB R2020a. For $M=K=6$, the time complexity of the DL-based schemes is significantly lower than the WMMSE algorithm. Specifically, the gap between the WMMSE and DL-based schemes increases with SNR. This is because the WMMSE algorithm requires a larger number of iterations for convergence in the high SNR regime, while the DL-based schemes show the same complexity regardless of SNR as long as the DNN structures remain unchanged. The DiLearn scheme operates faster than the proposed scheme, since it does not require the matrix inversion in (\ref{eq:optimal-beamforming-structure}). However, the proposed scheme is more competitive considering the trade-off between the performance and complexity.

\section{Conclusions\label{sec:Conclusion}}
This letter has proposed DL methods for joint design of beamforming and fronthaul quantization strategies in C-RANs. The key idea is to design an efficient DNN architecture based on inherent relationships between optimal beamforming and quantization noise statistics.
Numerical results demonstrate that the proposed DL-based scheme achieves the best trade-off between the sum-rate performance and time complexity in comparison to baseline schemes.
As future works, a more generalized framework can be considered by including a channel learning process in the learning structure or considering generalization for the number of UEs and APs.  


\end{document}